\newcommand\gam{\kern-.4ex\raisebox{.40ex}{$\gamma$}\kern-.40ex $_{N_0}$}
\newcommand\gamsq{\kern-.4ex\raisebox{.40ex}{$\gamma$}\kern-.40ex $_{N_0}^{\hspace{1mm}2}$}
\newcommand*\bigcdot{\mathpalette\bigcdot@{.5}}
\newcommand*\bigcdot@[2]{\mathbin{\vcenter{\hbox{\scalebox{#2}{$\m@th#1\odot$}}}}}
\newtheorem{theorem}{Theorem}
\newtheorem*{theorem*}{Theorem}
\newtheorem{remark}{Remark}
\newtheorem*{claim*}{Claim}
\newtheorem*{note*}{Note}
\newtheorem*{remark*}{Remark}
\newtheorem*{lemma*}{Lemma}
\newtheorem{lemma}[theorem]{Lemma}
\newtheorem*{fact*}{Fact}
\newtheorem{corollary}[theorem]{Corollary}
\newcommand{\N}{\mathbb{N}}
\newcommand{\F}{\mathbb{F}}
\newcommand{\E}{\mathop \mathbb{E}}
\title{From Bit to Block: Decoding on Erasure Channels}
\newcommand{\HP}[1]{\textcolor{green!40!black}{HP: #1}}
\newcommand{\OS}[1]{\textcolor{blue}{OS: #1}}
\author{Henry D. Pfister$^*$}
\address{$^*$ Duke University, henry.pfister@duke.edu}
\author{Oscar Sprumont$^{\dagger}$}
\address{$^\dagger$ University of Washington, osprum@cs.washington.edu}
\author{Gilles Z{\'{e}}mor$^{\ddagger}$}
\address{$^\ddagger$ Universit\'e de Bordeaux and Institut universitaire de France, zemor@math.u-bordeaux.fr}
\begin{document}

\begin{abstract}
    We provide a general framework for bounding the block error threshold of a linear code $C\subseteq \F_2^N$ over the erasure channel in terms of its bit error threshold. Our approach relies on understanding the minimum support weight of any $r$-dimensional subcode of $C$, for all small values of $r$. As a proof of concept, we use our machinery to obtain a new proof of the celebrated result that Reed--Muller codes achieve capacity on the erasure channel with respect to block error probability.
\end{abstract}

\maketitle

\section{Introduction}

We will be interested in the performance of linear codes over noisy communication channels. Ever since Shannon's seminal paper \cite{shannon1948entropy}, information theorists have spent considerable time and effort designing codes that \emph{achieve capacity}, i.e. codes $C\subseteq\F_2^N$ that approach the optimal tradeoff between their rate $R:=\frac{\log_2|C|}{N}$ and the amount of noise they can tolerate.

Shannon's probabilistic argument \cite{shannon1948entropy} established that uniform
random codes achieve capacity on both the binary symmetric channel and the erasure channel. The first explicit families of codes to provably achieve capacity were only obtained decades later, when Forney introduced concatenated codes \cite{forney1966concatenated}. More recently, Arikan showed that polar codes - which have both a deterministic construction and efficient encoding and decoding algorithms - 
achieve capacity on all memoryless channels \cite{arikan2009polar}. This brought renewed attention to the closely related Reed--Muller codes, which were shown to achieve capacity on the binary symmetric channel and the erasure channel in \cite{abbe2015RMlowrate,kudekar2016erasure,kudekar2016bitblock,reeves2021bitcapacity,abbe2023camelliacodes,abbe2023rmcapacityBSC}.

The proof that Reed--Muller codes of constant rate achieve capacity on the erasure channel \cite{kudekar2016erasure,kudekar2016bitblock} relies heavily on the double transitivity of their automorphism group. In fact, Kudekar, Kumar, Mondelli, Pfister, Şaşoğlu and Urbanke showed in \cite{kudekar2016erasure} that \emph{any} doubly transitive linear code achieves capacity under bit-MAP decoding. A very natural question is then to understand whether or not the bit and block error thresholds of doubly transitive codes are asymptotically equal. (By the result of \cite{kudekar2016erasure}, this is equivalent to asking whether or not every doubly transitive code achieves capacity on the erasure channel under block-MAP decoding.)

\subsection{Main Results}
Our main contribution is to develop a framework for bounding the gap between a code's bit and block error thresholds on the erasure channel. Let $C\subseteq \F_2^N$ be any linear code. We show that if, for all integers $r<r_0$ (where $r_0$ depends on the sharpness of the bit threshold of $C$), any $r$-dimensional subcode of $C$ has support\footnote{The support of a subcode $S\subseteq C$ is the set of indices $j\in[N]$ where at least one codeword $c\in S$ has $c_j=1$.} growing faster than $r\log N$, then the block error threshold of $C$ is close to its bit error threshold. An informal version of our main result is stated below. For the formal version, see Theorem \ref{bittoblock}.

\begin{theorem}[Informal]\label{bittoblockinformal}
    Consider any linear code $C\subseteq\F_2^N$ and suppose that a uniform random codeword $c\in C$ is sent over the erasure channel with erasure probability $p\in[0,1]$. Let $\delta\in[0,1]$ be such that, for every $i\in [N]$, the probability we fail to decode the $i^\textnormal{th}$ coordinate of $c$ is bounded by
    \begin{align*}
        \Pr_{p\textnormal{-erasures}}\Big[ \textnormal{cannot recover  }c_i\Big]\leq \delta.
    \end{align*}
    Suppose additionally that for all $r=1,2,\dotsc, \sqrt{\delta}N$, the support of any $r$-dimensional subcode of $C$ has size $\omega(r\log N).$ Then there exists $p'=p-o(1)$ such that the probability we fail to decode the codeword $c$ from erasures with probability $p'$ is bounded by
    \begin{align*}
         \Pr_{p'\textnormal{-erasures}}\Big[ \textnormal{cannot recover }c \Big]\leq \sqrt{\delta}+o(1).
    \end{align*}
\end{theorem}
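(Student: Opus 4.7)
For any erasure set $E \subseteq [N]$, let $C_E := \{c \in C : \text{supp}(c) \subseteq E\}$ and $U_E := \text{supp}(C_E)$. Then bit $i$ cannot be recovered if and only if $i \in U_E$, and block decoding succeeds if and only if $U_E = \emptyset$; the hypothesis therefore gives $\E[|U_E|] = \sum_{i=1}^N \Pr[i \in U_E] \leq \delta N$ at rate $p$. The plan is to combine a Markov bound at rate $p$ with a subcode-level union bound at rate $p'$, linked through the natural coupling that obtains $E' \subseteq E$ from $E$ by independent thinning (so $E'$ has the rate-$p'$ distribution, $C_{E'} \subseteq C_E$, and $U_{E'} \subseteq U_E$).

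With this coupling, I would decompose
\[
\Pr[|U_{E'}| \geq 1] \;\leq\; \Pr[|U_E| \geq \sqrt{\delta}\, N] \;+\; \Pr[0 < |U_{E'}| < \sqrt{\delta}\, N].
\]
Markov immediately bounds the first term by $\sqrt{\delta}$. On the second event we have $\dim(C_{E'}) = r \geq 1$ and $|U_{E'}| \geq d_r(C)$, where $d_r(C)$ denotes the minimum support weight of an $r$-dimensional subcode; combining the support assumption $d_r(C) = \omega(r \log N)$ with $|U_{E'}| < \sqrt{\delta}\, N$ then forces $r = o(\sqrt{\delta}\, N / \log N)$, well within the range $r \leq \sqrt{\delta}\, N$ where the assumption is available. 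A union bound over the supports $U$ of $r$-dimensional subcodes of $C$ with $|U| < \sqrt{\delta}\, N$ gives
\[
\Pr[0 < |U_{E'}| < \sqrt{\delta}\, N] \;\leq\; \sum_{r \geq 1} \; \sum_{U}\; (p')^{|U|},
\]
and I would choose $p - p' = o(1)$ (say, on the scale of $1/\sqrt{\log N}$) so that the decay of $(p')^{|U|}$, with $|U| \geq d_r \gg r \log N$, drives this sum to $o(1)$.

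The main obstacle will be controlling the double sum above. A naive count of $r$-dimensional subspaces of $C$ can be as large as $2^{rk}$, which completely swamps the decay $(p')^{d_r}$ when $k = \Theta(N)$ and $p'$ is a constant; a more delicate argument is needed that exploits the rigidity imposed by $d_r = \omega(r \log N)$. Promising directions include an incremental chain argument that builds a bad $r$-dimensional subcode one dimension at a time, controlling each step by the ratio $d_{j+1}/d_j$; or a layered bound grouping the supports by their size $s$, using the crude $\binom{N}{s}$ count weighted by $(p')^s$, which decays rapidly once $s \geq d_r$. Combining the Markov estimate with such a union bound yields $\Pr[\text{block failure at rate } p'] \leq \sqrt{\delta} + o(1)$.
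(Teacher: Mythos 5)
Your setup (the coupling $E' \subseteq E$, the Markov bound giving $\Pr[|U_E| \ge \sqrt{\delta}N] \le \sqrt{\delta}$ at rate $p$, and the reduction to bounding $\Pr[0 < |U_{E'}| < \sqrt{\delta}N]$) matches the first step of the paper's argument, but the second and harder step is not carried out, and the directions you propose for it do not work. The layered count $\binom{N}{s}(p')^s$ is of order $(eNp'/s)^s$, which is exponentially \emph{large} for every $s$ up to roughly $p'N$; since the supports you must control have size $s < \sqrt{\delta}N$, which is far below $p'N$ when $\delta = o(1)$ and $p'$ is bounded away from $0$, this bound never enters its decaying regime. Likewise, shrinking $p$ to $p' = p - o(1)$ only buys a factor $(1-o(1))^{|U|} = e^{-o(N)}$ per term, while the number of $r$-dimensional subcodes (or even of codewords of a single weight) can be $2^{\Theta(rN)}$ and is not controlled by any hypothesis of the theorem. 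This is precisely the weight-enumerator obstruction the paper is designed to circumvent, so the union bound cannot be closed from the stated assumptions.

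The missing idea is the sharp-threshold machinery of Margulis--Russo and Tillich--Z\'emor. Writing $f_r(p) = \Pr_{x\sim p}[\dim S_C(x) \ge r]$, one has $\frac{d}{dp} f_r(p) \ge d_r(C)\,\big(f_r(p) - f_{r+1}(p)\big)$, because any erasure pattern covering an $r$-dimensional but not an $(r+1)$-dimensional subcode has at least $d_r(C)$ pivotal coordinates. Integrating, the area between the inverse curves $\theta_r$ and $\theta_{r+1}$ is at most $1/d_r(C)$, so the total area between $\theta_1$ and $\theta_{\sqrt{\delta}N}$ is at most $\sum_{r < \sqrt{\delta}N} 1/d_r(C) \le (\log N)/\Delta = o(1)$ under your support hypothesis. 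A Markov argument in the $\alpha$ variable together with monotonicity of the $\theta_r$ then converts this area bound into the pointwise statement $\theta_1(\sqrt{\delta} + o(1)) \ge \theta_{\sqrt{\delta}N}(\sqrt{\delta}) - o(1) \ge p - o(1)$, which is exactly the claimed conclusion; no enumeration of subcodes or codewords is ever needed. You would need to either supply this derivative estimate (or an equivalent substitute) or find a genuinely new way to control your double sum before the proposal constitutes a proof.
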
  
As a proof of concept, we show in Section \ref{sectionrm} that Reed--Muller codes of constant rate satisfy the conditions of Theorem \ref{bittoblockinformal}. Our analysis relies on the work of Wei, who obtained in \cite{wei1991RMsubcodes} exact expressions for the minimum support weight of any $r$-dimensional subcode of a Reed--Muller code. Bounding these expressions appropriately and combining them with Theorem \ref{bittoblockinformal} gives an alternative proof that Reed--Muller codes achieve capacity on the erasure channel under block-MAP decoding, a fact that was shown by Kudekar, Kumar, Mondelli, Pfister, Şaşoğlu and Urbanke in \cite{kudekar2016erasure,kudekar2016bitblock}. See Theorem \ref{rmcapacity}.

We note that one key argument in the proof of \cite{kudekar2016bitblock} is the claim that the low-weight codewords of Reed-Muller codes do not contribute much to the decoding error probability, and thus one only needs to worry about codewords of high weight. In order to prove this claim, \cite{kudekar2016bitblock} relies on a very careful analysis of Reed--Muller codes' weight enumerator, which may be difficult to generalize to other codes. In contrast, our approach only requires somewhat loose bounds on the minimum support weight of any subcode. For $r<N^{1-o(1)}$, we need any $r$-dimensional subcode to have support weight larger than $r\log N$; for constant-rate Reed--Muller codes, it turns out that any such subcode has support weight larger than $r2^{(\log N)^{0.99}}$ (see Corollary \ref{ratiorm}).
\subsection{Techniques}
Our proof of Theorem \ref{bittoblockinformal} is based on a well-known work of Tillich and Z{\'{e}}mor \cite{tillich2000sharptransition}, who established sharp threshold results for the symmetric and erasure channels. For a linear code $C\subseteq\F_2^N$ and a string $x\in\{0,1\}^N$, we define
\begin{align}\label{defnsc}
    S_C(x):=\Big\{c\in C:c_i\leq x_i\textnormal{ for all } i\in[N] \Big\}
\end{align}
to be the set of codewords that are indistinguishable from the $0$-vector once you erase all coordinates $i\in[N]$ where $x_i=1$. Note that $S_C(x)$ is a subcode of $C$. Tillich and Z{\'{e}}mor showed in \cite{tillich2000sharptransition} that for any fixed $r\in\{1,2,\dotsc,\textnormal{dim }C\}$, the function
\begin{align}\label{defnfr}
    f_r(p):=\Pr_{x\sim p}\Big[\textnormal{dim }S_C(x)\geq r\Big]
\end{align}
transitions rapidly from $\approx 0$ to $\approx 1$ as a function of $p$, where $x\sim p$ denotes a $p$-noisy random string $x\in\{0,1\}^N.$ They also showed that for any $r$, the curves $f_{r}(p)$ and $f_{r+1}(p)$ stay within a distance of about $\frac{1}{\sqrt{d_r(C)}}$ from each other, where $d_r(C)$ denotes the minimum support weight of any $r$-dimensional subcode of $C$. 

In this work, we first strengthen their result to show that the curves $f_{r}(p)$ and $f_{r+1}(p)$ in fact stay within a distance of about $\frac{1}{d_r(C)}$ from each other (see the proof of Theorem \ref{straightshot}). We then leverage the fact that every doubly transitive code achieves capacity under bit-MAP decoding to prove that for some $r_0=N^{1-o(1)}$, the function $f_{r_0}$ corresponding to any doubly transitive code $C\subseteq\F_2^N$ satisfies 
\begin{align*}
f_{r_0}\Big(1-\textnormal{rate}(C)-o(1)\Big)=o(1) .
\end{align*}
Since the distance between the curves $f_r(p)$ and $f_{r+1}(p)$ can be bounded in terms of the minimum support weight $d_r(C)$, we are then able to prove (see Theorem \ref{thm2transitive}) that the function $f_1(p)$ corresponding to any doubly transitive code $C\subseteq\F_2^{N}$ with large enough minimum support weights satisfies
\begin{align}\label{endeqintro}
f_{1}\Big(1-\textnormal{rate}(C)-o(1)\Big)=o(1) .
\end{align} 
But the left hand side above is exactly the probability that a sent codeword $c\in C$ can be uniquely recovered from random erasures of probability $1-\textnormal{rate}(C)-o(1)$, so equation (\ref{endeqintro}) shows that any doubly transitive linear code $C\subseteq\F_2^N$ with large enough minimum support weights $\{d_r(C)\}$ achieves capacity on the erasure channel under block-MAP decoding. See Theorem \ref{bittoblock} for the formal statement.

\section{Preliminaries}\label{sectionnotation}
\subsection{Notation and Conventions}
We will be interested in the behavior of linear codes $C\subseteq\F_2^N$ over the erasure channel BEC$_p$, for some $p\in[0,1].$ Throughout this paper, we use $\ln X$ to denote the natural logarithm of a real number $X>0$ and use $\log X$ to denote the logarithm of $X$ in base 2. We denote by $[N]$ the set of integers $\{1,2,\dotsc,N\}$ and use the shorthand 
\begin{align*}
    x\sim p
\end{align*}
to signify that the random variable $x\in\F_2^N$ has independent
Bernoulli coordinates
\begin{align*}
    x_i=\begin{cases}
1 & \text{with probability $p$},\\
0 & \text{otherwise}.
\end{cases}
\end{align*}
We will think of the coordinates $j\in[N]$ where $x_j=1$ as the coordinates that are erased by the channel and will call $x$ the \emph{erasure pattern}. Formally, when a sender sends some codeword $c\in C$ through the channel BEC$_p$, an erasure pattern $x\sim p$ is sampled and the receiver receives a vector $y\in \{0,1,*\}^N$ with coordinates
\begin{align*}
    y_i=\begin{cases}
c_i & \text{if $x_i=0$},\\
* & \text{otherwise}.
\end{cases}
\end{align*}
A notion that will be important in our analysis is the notion of \emph{covered codeword}. For any erasure pattern $x\in\{0,1\}^N$, we say that a codeword $c\in C$ is \emph{covered} by $x$ if $x_i\geq c_i$ for all $i\in[N]$, and denote this by
\begin{align*}
    x\succ c.
\end{align*}
We note that if the erasure pattern $x$ covers some codeword $c\neq 0$, then it is impossible for the receiver to distinguish between a sent message $c'\in C$ and the codeword $c+c'.$ We denote the subcode of $C$ covered by an erasure pattern $x$ by
\begin{align*}
    S_C(x):=\Big\{c\in C:x\succ c \Big\}.
\end{align*}
A sent codeword $c\in C$ can be recovered uniquely from an erasure pattern $x\in\F_2^N$ if and only if $S_C(x)=\{0\}$. Similarly, for any coordinate $i\in [N]$, the bit $c_i$ can be recovered if and only if $i\notin \textnormal{supp}\big(S_C(x)\big)$, where for any 
subcode $S\subseteq C$ we define  
\begin{align}\label{defsupport}
    \textnormal{supp}\big(S\big):=\Big\{ j\in[N]:\exists c\in S \textnormal{ with }c_j=1  \Big\}
\end{align}
to be the \emph{support} of $S$. For every $r\leq \textnormal{dim } C$, we define the function $g_r\colon\{0,1\}^N\rightarrow \{0,1\}$ to be
\begin{align}\label{defngr}
    g_r(x):=\begin{cases}
1    & \text{if $\textnormal{dim }S_C(x) \geq r$},\\
0 & \text{otherwise}.
\end{cases}
\end{align}
For any fixed linear code $C\subseteq\F_2^N$, we will be interested in the expected value of its corresponding function $g_r$,
\begin{align*}%\label{defnfr}
	f_r(p):=\E_{x\sim p}\big[g_r(x) \big],
\end{align*}
as well as the inverse map $\theta_r\colon[0,1]\rightarrow [0,1]$,
\begin{align*}
    \theta_r(\alpha):=f_r^{-1}(\alpha).
\end{align*}
The main quantity that will allow us to control the behavior of these two functions is the minimum support size of any $r$-dimensional subcode of $C$,
 \begin{align}\label{defdr}
    d_r(C):=\min_{\substack{S\subseteq C\\\textnormal{dim($S)$}=r}}\big|\textnormal{supp($S)$}\big|.
\end{align}

One explicit family of codes we will be interested in is the family of Reed--Muller codes. We denote by $\mathsf{RM}(n,d)$ the Reed--Muller code with
$n$ variables and degree $d$. The codewords of $\mathsf{RM}(n,d)$ are the evaluation vectors (over all points in $\F_2^n$) of all polynomials of degree $\leq d$ in $n$ variables. The Reed--Muller code $\mathsf{RM}(n,d)$ has dimension $\binom{n}{\leq d}$ (see for instance
page 5 of \cite{abbe2021survey}).

\subsection{Known Results}
Our work makes use of several well-known results, which we state below. 
First, we will need the following bound on the bit error decay of any doubly transitive linear code. It is due to Kudekar, Kumar, Mondelli, Pfister, Şaşoğlu and Urbanke. 
\begin{theorem}[follows from \cite{kudekar2016erasure}, Lemma 34 and Theorem 19]\label{bitcapacity}
    Let $C\subseteq \F_2^N$ be a doubly transitive code and fix any index $i\in[N]$. Define $p^*\in[0,1]$ to be the noise parameter at which $$\Pr_{x\sim p^*}\Big[i\in \textnormal{supp}\big(S_C(x)\big)\Big]=\frac{1}{2}.$$ 
    Then for all $p\leq p^*$, we have
    \begin{align*}
        \Pr_{x\sim p}\Big[i\in \textnormal{supp}\big(S_C(x)\big)\Big]\leq e^{-(p^*-p)\log (N-1)}.
    \end{align*}
\end{theorem}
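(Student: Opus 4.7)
The plan is to reduce the claim to the pointwise differential inequality
\[ h'(p) \geq h(p) \log(N-1) \qquad \text{for all } p \in [0, p^*], \]
where $h(p) := \Pr_{x\sim p}[i \in \textnormal{supp}(S_C(x))]$ (independent of $i$ by double transitivity of $\textnormal{Aut}(C)$), and then integrate from $p$ to $p^*$. Combined with $h(p^*) = 1/2 \leq 1$, integration immediately yields $h(p) \leq h(p^*) \cdot e^{-(p^*-p)\log(N-1)} \leq e^{-(p^*-p)\log(N-1)}$.

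To set up the differential inequality, I would first observe that the indicator $f_i(x) := \mathbf{1}[i \in \textnormal{supp}(S_C(x))]$ is a monotone nondecreasing Boolean function in $x$ and vanishes whenever $x_i = 0$ (since any unerased bit is directly observed). Condition on $x_i = 1$: define $\phi(y) := f_i(y, 1)$ on $\{0,1\}^{[N] \setminus \{i\}}$, so that $\mu(p) := \E_{y \sim p}[\phi(y)] = h(p)/p$. The stabilizer subgroup $\{\sigma \in \textnormal{Aut}(C) : \sigma(i) = i\}$ preserves $\phi$ and, by double transitivity, acts transitively on $[N] \setminus \{i\}$; consequently all $p$-influences $I_j^{(p)}(\phi)$ for $j \neq i$ coincide.

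With this symmetry in hand, apply the Russo--Margulis identity to $\phi$ together with the sharp-threshold theorem for monotone Boolean functions with transitive symmetry (Friedgut--Kalai, drawing on the Bourgain--Kahn--Kalai--Katznelson--Linial framework specialized to the all-equal-influences regime) to deduce a bound of the form $\mu'(p) \geq c \log(N-1) \cdot \mu(p)(1 - \mu(p))$ for a universal constant $c$. Combining this with the decomposition $h(p) = p\mu(p)$ and $h'(p) = \mu(p) + p\mu'(p)$ then yields the target inequality on $[0, p^*]$.

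The main obstacle is extracting precisely the exponent $\log(N-1)$, rather than a strictly smaller $c\log(N-1)$. This requires a tight version of the sharp-threshold bound in the equal-influences regime, together with some care when $\mu(p)$ is close to $1$: there the factor $(1 - \mu(p))$ in the threshold bound degrades, and one must rely on the complementary $1/p$ contribution arising from the derivative of $p\mu(p)$ to close the gap. Double transitivity (rather than mere transitivity of $\textnormal{Aut}(C)$) is essential here, because it supplies the transitive action of the stabilizer on $[N] \setminus \{i\}$ that powers the sharp-threshold step applied to the $(N-1)$-variable function $\phi$.
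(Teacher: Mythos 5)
First, a point of reference: the paper does not prove this statement at all --- it is imported directly from Kudekar et al.\ (Lemma 34 combined with Theorem 19 of that reference), so there is no in-paper argument to compare against. Your high-level plan does match the mechanism of the cited proof: condition on $x_i=1$, pass to the extrinsic function $\phi$ on $\{0,1\}^{[N]\setminus\{i\}}$ so that $h(p)=p\,\mu(p)$, use double transitivity to make the stabilizer of $i$ act transitively on $[N]\setminus\{i\}$ (hence all influences of $\phi$ coincide), and combine Margulis--Russo with a sharp-threshold inequality. That part of the reconstruction is faithful to the source.

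As a proof, however, there are two genuine gaps. (1) The quantitative engine is missing: generic Friedgut--Kalai/BKKKL gives $\mu'(p)\ge c\,\mu(1-\mu)\log(N-1)$ only for an unspecified universal constant $c<1$ (which moreover degrades as $p$ moves away from $\tfrac12$), and any such bound can only ever produce an exponent $c\,(p^*-p)\log(N-1)$; the constant-one exponent in the statement requires the specific sharpened threshold theorem of the cited reference, which you acknowledge needing but do not supply. (2) More seriously, the passage from the threshold bound to your target inequality $h'(p)\ge h(p)\log(N-1)$ breaks exactly where you wave at it. Writing $L=\log(N-1)$ and $h=p\mu$, the target is equivalent to $\mu'\ge\mu\bigl(L-\tfrac1p\bigr)$, while even the idealized bound $\mu'\ge\mu(1-\mu)L$ leaves a deficit of $\mu\bigl(\mu L-\tfrac1p\bigr)$, which is positive precisely when $h(p)>1/L$. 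Since $h(p^*)=\tfrac12\gg 1/L$, the pointwise differential inequality is not established on a whole neighborhood of $p^*$ (the range where $h\in(1/L,\tfrac12]$), and the ``complementary $1/p$ contribution'' covers the deficit only when $h(p)\le 1/L$ --- i.e., only where it isn't needed. Consequently the integration from $p$ up to $p^*$ cannot be carried out as described; one must instead integrate a log-odds-type inequality for $\mu$ (or anchor at the median of $\mu$ rather than of $h$ and translate between the two), which is the role played by Lemma 34 in the cited work.
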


We will also need the following result of Wei, which states that the subcodes of Reed--Muller codes with smallest supports are Reed--Muller codes of smaller degrees on fewer variables. 
\begin{theorem}[follows from \cite{wei1991RMsubcodes}, Theorem 7] \label{wei}
    For every $t\leq d\leq n$, a $\binom{n-t}{\leq d-t}$-dimensional subcode of $\mathsf{RM}(n,d)$ with smallest support is the subcode $S_t\subseteq\mathsf{RM}(n,d)$, which contains the evaluation vectors of all polynomials of the form
$$p(x_{t+1},x_{t+2},\dotsc,x_{n})\prod_{i=1}^{t}x_i,$$
    for $p(x_{t+1},x_{t+2},\dotsc,x_{n})$ any polynomial of degree $\leq d-t$ in the variables $x_{t+1},x_{t+2},\dotsc,x_{n}.$ 
\end{theorem}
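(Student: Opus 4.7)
The plan is to prove the result in two stages. First, I would verify that the candidate subcode $S_t$ has dimension exactly $\binom{n-t}{\leq d-t}$ and support of size exactly $2^{n-t}$. The polynomials of the form $x_1\cdots x_t\cdot p(x_{t+1},\dotsc,x_n)$ with $\deg p \leq d-t$ form a vector space of dimension $\binom{n-t}{\leq d-t}$ (a basis is given by products of $x_1\cdots x_t$ with the monomials of degree $\leq d-t$ in the last $n-t$ variables). The evaluation map is injective on this space, since restriction to the affine flat $A := \{x\in\F_2^n : x_1=\dotsb=x_t=1\}$ recovers $p$ itself, and distinct $p$'s give distinct evaluations over $\F_2^{n-t}$. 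The support of any codeword of $S_t$ lies in $A$, which has size $2^{n-t}$, and taking $p\equiv 1$ gives the indicator of $A$, achieving equality.

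The second and harder stage is to prove optimality: no $\binom{n-t}{\leq d-t}$-dimensional subcode of $\mathsf{RM}(n,d)$ has strictly smaller support. Let $M(n,d,u)$ denote the maximum dimension of any subcode of $\mathsf{RM}(n,d)$ with support of size at most $u$. I would prove by induction on $n$ that $M(n,d,u)<\binom{n-t}{\leq d-t}$ whenever $u<2^{n-t}$. The key tool is the Plotkin-style decomposition: every polynomial of degree $\leq d$ in $n$ variables decomposes uniquely as $f = g(x_1,\dotsc,x_{n-1}) + x_n h(x_1,\dotsc,x_{n-1})$ with $\deg g\leq d$ and $\deg h\leq d-1$. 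For any subcode $S$ with support $U=U_0\sqcup U_1$ (split by the value of $x_n$), the projection $f\mapsto g$ sends $S$ into a subcode of $\mathsf{RM}(n-1,d)$ with support $\subseteq U_0$, while its kernel consists of polynomials $x_n h$ whose restriction to $\{x_n=1\}$ forms a subcode of $\mathsf{RM}(n-1,d-1)$ with support $\subseteq U_1$. Taking dimensions yields the recursion
\begin{align*}
M(n,d,u)\leq \max_{u_0+u_1\leq u}\bigl[M(n-1,d,u_0)+M(n-1,d-1,u_1)\bigr].
\end{align*}

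To close the induction, one matches this recursion against the Pascal-type identity $\binom{n-t}{\leq d-t}=\binom{n-1-t}{\leq d-t}+\binom{n-1-t}{\leq d-1-t}$, together with $2^{n-t}=2^{n-1-t}+2^{n-1-t}$: a balanced split $u_0=u_1=2^{n-1-t}$ exactly saturates the bound, and any split with $u_0+u_1<2^{n-t}$ forces at least one summand strictly below its inductive threshold. The main obstacle is handling \emph{unbalanced} splits, where one of $u_0, u_1$ exceeds $2^{n-1-t}$: in that regime the inductive hypothesis has to be applied at a different threshold $t'\neq t$, and the resulting bound is not immediately tight enough. The cleanest way to overcome this is to strengthen the inductive hypothesis to cover all support sizes simultaneously, namely to prove the full formula $d_r(\mathsf{RM}(n,d))=2^{n-t}$ for every $r\in\bigl(\binom{n-t+1}{\leq d-t+1},\binom{n-t}{\leq d-t}\bigr]$; this monotone step-function statement is well adapted to the recursion and closes the induction uniformly, delivering Wei's Theorem with $S_t$ as an extremal subcode.
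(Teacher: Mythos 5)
First, note that the paper does not prove this statement at all: it imports it wholesale from Wei's Theorem~7, so you are supplying a proof the authors deliberately outsource. Your first stage is correct (injectivity of evaluation follows from restricting to the flat $x_1=\dotsb=x_t=1$, and the support count is right), and your Plotkin-style recursion $M(n,d,u)\leq\max_{u_0+u_1\leq u}[M(n-1,d,u_0)+M(n-1,d-1,u_1)]$ is valid and is indeed the standard engine behind the weight hierarchy of Reed--Muller codes. You also correctly identify the real difficulty: the balanced split saturates the Pascal identity, and the unbalanced splits are where a naive induction breaks.

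The gap is in your proposed fix. The strengthened inductive hypothesis --- that $d_r(\mathsf{RM}(n,d))=2^{n-t}$ for all $r$ strictly between $\dim S_{t+1}$ and $\dim S_t$ (your interval endpoints appear transposed, but this is the intended reading) --- is false: the weight hierarchy is not constant on these intervals. Concretely, for $\mathsf{RM}(3,1)$ the corner dimensions are $\dim S_1=1$ and $\dim S_0=4$, so your formula predicts $d_2=2^3=8$; but the two-dimensional subcode spanned by the evaluations of $x_1$ and $x_2$ has support $8-|\{x:x_1=x_2=0\}|=6$, and indeed $d_2(\mathsf{RM}(3,1))=6$. More generally $d_2(\mathsf{RM}(n,d))=3\cdot 2^{n-d-1}$ for $d<n$, which is not a power of two. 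Since $d_r$ increases strictly between consecutive corner points, the bound you would feed into the unbalanced branch of the recursion is simply wrong, and the induction does not close. What is actually needed (and what Wei proves) is the full hierarchy: $d_r$ equals a sum of distinct powers of two determined by a canonical greedy representation of $r$ in terms of the dimensions $\binom{n-i}{\leq d-j}$, and the induction must be carried with that finer, strictly increasing interpolant. Your architecture is the right one, but the constant-on-intervals claim cannot serve as the inductive hypothesis.
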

Finally, we will need some tools to understand the function $g_r$ defined in (\ref{defngr}). For any function $g\colon\F_2^N\rightarrow\{0,1\}$, we define the function
\begin{align}\label{defnhg}
    h_g(x):=\begin{cases}
\Big| \big\{i\in[N]:g(x+e_i)=0   \big\} \Big| & \text{if $g(x)=1$},\\
0 & \text{otherwise}
\end{cases}
\end{align}
and its minimal nonzero value
\begin{align}\label{defnvg}
    \nu_g:=\min\big\{h_g(x):x\in\F_2^n \textnormal{ such that } h_g(x)\neq 0\big\}.
\end{align} 
The quantities (\ref{defnhg}) and 
(\ref{defnvg}) were introduced by Margulis \cite{margulis1974transition} and Russo \cite{russo1982transition} to prove sharp transition results for the expectation of monotone\footnote{A Boolean function $g:\{0,1\}^N\rightarrow\{0,1\}$ is said to be \emph{monotone} if for all $x,y\in\{0,1\}^N$ with $x_i\geq y_i$ for all $i\in[N]$, we have $g(x)\geq g(y).$} Boolean functions. The following lemma, which relates the derivative of the expectation of any monotone function $g$ to the expectation of its corresponding function $h_g$, is often called the \emph{Margulis-Russo Lemma}.
\begin{lemma}[Margulis-Russo Lemma, \cite{margulis1974transition,russo1982transition}]\label{margulisrussolemma}
    For any monotone Boolean function $g\colon\{0,1\}^N\rightarrow\{0,1\}$ and any $p\in[0,1]$, we have
\begin{align*}
        \frac{d}{dp}\E_{x\sim p}[g(x)]=\frac{1}{p}\cdot\E_{x\sim p}[h_g(x)].
    \end{align*}
\end{lemma}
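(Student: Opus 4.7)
The plan is to prove the identity by first computing the derivative of $\E_{x\sim p}[g(x)]$ directly from its definition as a sum of monomials in $p$ and $1-p$, and then re-expressing the resulting sum using monotonicity.

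Concretely, I would begin by writing
\begin{align*}
    \E_{x\sim p}[g(x)]=\sum_{x\in\{0,1\}^N} g(x)\prod_{i=1}^{N}p^{x_i}(1-p)^{1-x_i}
\end{align*}
and differentiating with respect to $p$ term-by-term. Since $\frac{d}{dp}\big(p^{x_i}(1-p)^{1-x_i}\big)=(2x_i-1)$, a short calculation should give
\begin{align*}
    \frac{d}{dp}\E_{x\sim p}[g(x)]=\sum_{i=1}^{N}\E_{y\sim p}\big[g(y^{i\to 1})-g(y^{i\to 0})\big],
\end{align*}
where $y^{i\to b}$ denotes $y$ with its $i$th coordinate set to $b\in\{0,1\}$. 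Each summand is the probability that coordinate $i$ is \emph{pivotal} for $g$, i.e., that flipping $x_i$ changes the value of $g$.

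Next, I would use monotonicity to rewrite this sum of pivotal probabilities in terms of $h_g$. For a monotone $g$, coordinate $i$ is pivotal at $x$ precisely when $g(x^{i\to 1})=1$ and $g(x^{i\to 0})=0$. In particular, since the event ``$i$ is pivotal'' does not depend on $x_i$, we may condition on $x_i=1$ without changing its probability; thus for each $i$,
\begin{align*}
    \Pr_{x\sim p}\big[i\textnormal{ is pivotal}\big]
    =\frac{1}{p}\Pr_{x\sim p}\big[i\textnormal{ is pivotal and }x_i=1\big]
    =\frac{1}{p}\Pr_{x\sim p}\big[g(x)=1\textnormal{ and }g(x+e_i)=0\big],
\end{align*}
where the last equality uses that when $x_i=1$, the vector $x+e_i$ equals $x^{i\to 0}$, so pivotality together with $x_i=1$ forces $g(x)=1$ and $g(x+e_i)=0$. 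Moreover, whenever $g(x)=1$ and $g(x+e_i)=0$, monotonicity already forces $x_i=1$ (else $x+e_i\geq x$ and $g(x+e_i)\geq g(x)=1$), so the condition $x_i=1$ is automatic on the right-hand side.

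Summing over $i\in[N]$ and recognizing $\sum_i\mathbf{1}[g(x)=1]\,\mathbf{1}[g(x+e_i)=0]$ as the definition of $h_g(x)$ gives the desired identity. I do not anticipate any serious obstacle: the only subtle point is making sure that monotonicity is used in both directions to identify pivotality (with $x_i=1$) with the event appearing in the definition of $h_g$. Everything else is a direct computation.
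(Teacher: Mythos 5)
Your proof is correct. The paper states this lemma as a known result of Margulis and Russo and gives no proof of its own, so there is nothing to compare against; your argument is the standard derivation (differentiating the product measure to express the derivative as the total pivotal influence $\sum_i \Pr[i\text{ pivotal}]$, then using monotonicity and the independence of pivotality from $x_i$ to identify $p\cdot\Pr[i\text{ pivotal}]$ with $\Pr[g(x)=1,\ g(x+e_i)=0]$, whose sum over $i$ is exactly $\E[h_g(x)]$). The one subtle step you flag — that $g(x)=1$ and $g(x+e_i)=0$ already force $x_i=1$ by monotonicity, so the event in the definition of $h_g$ coincides with ``$i$ pivotal and $x_i=1$'' — is handled correctly.
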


Tillich and Z{\'{e}}mor showed that the quantity $\nu_{g_r}$ defined in (\ref{defngr}) and (\ref{defnvg}) can be bounded by the quantity $d_r(C)$ defined in (\ref{defdr}).

\begin{lemma}[follows from \cite{tillich2000sharptransition}, page 476]\label{erasuretillich}
Consider any linear code $C\subseteq \F_2^N.$ For every $1\leq r\leq \textnormal{dim }C$, the corresponding function $g_r$ satisfies
\begin{align*}
        \nu_{g_r}\geq d_r(C).
    \end{align*}
    Additionally, for $1\leq r\leq \textnormal{dim }C-1$ we have
    \begin{align*}
        \Pr_{x\sim p}\Big[ h_{g_r}(x)\neq 0 \Big]=\Pr_{x\sim p}\Big[g_{r}(x)=1\Big]-\Pr_{x\sim p}\Big[g_{r+1}(x)=1\Big].
    \end{align*}
\end{lemma}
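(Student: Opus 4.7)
Both parts of the lemma will drop out of an explicit formula for $h_{g_r}(x)$ in terms of $\dim S_C(x)$ and $\textnormal{supp}(S_C(x))$. The key observation I would start with is that $g_r$ is monotone: if $x \preceq y$ coordinatewise then $S_C(x) \subseteq S_C(y)$ and hence $g_r(x) \leq g_r(y)$. So the only single-bit flips $x \to x + e_i$ that can make $g_r$ decrease are those with $x_i = 1$, i.e.\ those that ``unerase'' coordinate $i$. In that case the elementary identity
\[
S_C(x + e_i) = \{c \in S_C(x) : c_i = 0\}
\]
holds, and the right-hand side is the kernel of the $\F_2$-linear ``read coordinate $i$'' map on $S_C(x)$. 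Its codimension in $S_C(x)$ is $1$ if $i \in \textnormal{supp}(S_C(x))$ and $0$ otherwise.

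With this in hand, the event $g_r(x) = 1$ and $g_r(x + e_i) = 0$ --- equivalently $\dim S_C(x) \geq r$ and $\dim S_C(x + e_i) < r$ --- forces $\dim S_C(x) = r$ exactly, $x_i = 1$, and $i \in \textnormal{supp}(S_C(x))$, because the codimension drop is at most $1$. Conversely, whenever $\dim S_C(x) = r$ every $i \in \textnormal{supp}(S_C(x))$ produces such a drop. I would then conclude
\[
h_{g_r}(x) = \begin{cases} |\textnormal{supp}(S_C(x))| & \text{if } \dim S_C(x) = r, \\ 0 & \text{otherwise.} \end{cases}
\]

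Both conclusions of the lemma are now immediate. For the first, every nonzero value taken by $h_{g_r}$ is the support size of some $r$-dimensional subcode of $C$, hence is at least $d_r(C)$, giving $\nu_{g_r} \geq d_r(C)$. For the second, using the nesting of the events $\{\dim S_C(x) \geq r\}$ and $\{\dim S_C(x) \geq r+1\}$,
\[
\Pr_{x \sim p}[h_{g_r}(x) \neq 0] = \Pr_{x \sim p}[\dim S_C(x) = r] = \Pr_{x \sim p}[g_r(x) = 1] - \Pr_{x \sim p}[g_{r+1}(x) = 1].
\]
The only step that requires any thought is the codimension computation for $S_C(x + e_i)$ inside $S_C(x)$, which pins down exactly when a bit-flip crosses the dimension threshold $r$. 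Once this one-line linear-algebra fact is established, the rest is bookkeeping, so I do not anticipate any real obstacle.
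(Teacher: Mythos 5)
Your proof is correct and follows essentially the same route as the paper (which only sketches the argument and defers to Tillich--Z\'emor): the whole content is the observation that unerasing coordinate $i$ replaces $S_C(x)$ by the kernel of the coordinate-$i$ functional, which has codimension $1$ precisely when $i\in\textnormal{supp}(S_C(x))$ (and $i\in\textnormal{supp}(S_C(x))$ already forces $x_i=1$), yielding $h_{g_r}(x)=|\textnormal{supp}(S_C(x))|\cdot\mathbf{1}[\dim S_C(x)=r]$. Both claims of the lemma then follow exactly as you say.
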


See \cite{tillich2000sharptransition} for the proof. Intuitively, the two statements in Lemma \ref{erasuretillich} follow from the following two facts:
\begin{enumerate}
    \item If an index $i\in [N]$ is in the support of the subcode covered by the erasure pattern $x\in\{0,1\}^N$, then half of the codewords $c\in S_C(x)$ have $c_i=1.$ Thus, if you remove the erasure symbol on the $i^\textnormal{th}$ coordinate of the erasure pattern $x$, then you only cover half as many codewords as you did before (and thus, the dimension of the covered subcode has gone down by 1).
    \item Removing the erasure symbol on any coordinate $i\in[N]$ of an erasure pattern $x$ can reduce the number of covered codewords by at most half (for every $i\in [N]$, at least half of the codewords $c\in S_C(x)$ have $c_i=0$). Thus, removing the erasure symbol on any coordinate $i\in[N]$ of an erasure pattern $x$ can decrease the dimension of the covered subcode by at most $1.$
\end{enumerate}

\section{General Linear Codes}

In this section, we will provide general conditions under which the bit and block error thresholds of an arbitrary linear code $C\subseteq\F_2^N$ are close to one another. We start by bounding the distance between the curves $\theta_1(\alpha)$ and $\theta_{N_0}(\alpha)$, for any $N_0>1.$

\begin{theorem}\label{straightshot}
Fix any linear code $C\subseteq\F_2^N$ and any integer $N_0\leq\textnormal{dim }C$. Then, letting \gam{}$:=\sqrt{\sum_{r=1}^{N_0-1}\frac{1}{d_r(C)}}$, we have that for any $$0\leq\alpha\leq 1- \textnormal{\gam},$$ the functions $\theta_1$ and $\theta_{N_0}$ associated with $C$ satisfy
    \begin{align*}
         \theta_{1}\big(\alpha+\textnormal{\gam} \big)\geq \theta_{N_0}(\alpha)  -\textnormal{\gam}.
    \end{align*}
\end{theorem}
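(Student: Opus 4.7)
My approach is to combine Lemma~\ref{erasuretillich} with the Margulis--Russo Lemma to obtain a clean integral bound $\int_0^1 (f_1(p)-f_{N_0}(p))\,dp\leq \gamma_{N_0}^2$, and then convert this into the desired pointwise inequality by a simple averaging argument on a subinterval of length $\gamma_{N_0}$.

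First I would observe that each $g_r$ is monotone (since $S_C(x)\subseteq S_C(y)$ whenever $x\leq y$ coordinate-wise), so Margulis--Russo applies. Fix $r\in\{1,\dots,N_0-1\}$. Lemma~\ref{erasuretillich} tells us both that $h_{g_r}$ is either zero or at least $d_r(C)$, and that $\Pr_{x\sim p}[h_{g_r}(x)\neq 0]=f_r(p)-f_{r+1}(p)$. Combined with Margulis--Russo and the inequality $1/p\geq 1$ for $p\in(0,1]$, this yields
\[
f_r'(p)\ =\ \tfrac{1}{p}\,\E_{x\sim p}[h_{g_r}(x)]\ \geq\ d_r(C)\bigl(f_r(p)-f_{r+1}(p)\bigr).
\]
Integrating over $[0,1]$ and using $\int_0^1 f_r'(p)\,dp=f_r(1)-f_r(0)\leq 1$ gives the single-step bound $\int_0^1(f_r(p)-f_{r+1}(p))\,dp\leq 1/d_r(C)$. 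Telescoping over $r=1,\dots,N_0-1$ produces
\[
\int_0^1 (f_1(p)-f_{N_0}(p))\,dp\ \leq\ \sum_{r=1}^{N_0-1}\tfrac{1}{d_r(C)}\ =\ \gamma_{N_0}^2.
\]

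To finish, I would set $p^*:=\theta_{N_0}(\alpha)$, so $f_{N_0}(p^*)=\alpha$. The case $p^*<\gamma_{N_0}$ is trivial since $\theta_1(\alpha+\gamma_{N_0})\geq 0$, so assume $p^*\geq\gamma_{N_0}$ and apply the integral bound to the subinterval $[p^*-\gamma_{N_0},p^*]$: since this interval has length $\gamma_{N_0}$ while the integrand has total mass at most $\gamma_{N_0}^2$ on it, there exists some $q\in[p^*-\gamma_{N_0},p^*]$ with $f_1(q)-f_{N_0}(q)\leq \gamma_{N_0}$. Monotonicity of $f_1$ and $f_{N_0}$ then chains this inequality back to the left endpoint:
\[
f_1(p^*-\gamma_{N_0})\ \leq\ f_1(q)\ \leq\ f_{N_0}(q)+\gamma_{N_0}\ \leq\ f_{N_0}(p^*)+\gamma_{N_0}\ =\ \alpha+\gamma_{N_0},
\]
which by monotonicity of $\theta_1$ rearranges to $\theta_1(\alpha+\gamma_{N_0})\geq p^*-\gamma_{N_0}$, as required.

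The main conceptual point---and the source of the $1/d_r(C)$ improvement over the $1/\sqrt{d_r(C)}$ pointwise bound in Tillich--Z\'emor---is that feeding $\nu_{g_r}\geq d_r(C)$ directly into Margulis--Russo produces an $O(1/d_r(C))$ bound on the \emph{integral} of $f_r-f_{r+1}$, which is exactly what the averaging argument on a window of length $\gamma_{N_0}$ needs to produce a pointwise gap of order $\gamma_{N_0}$. I do not anticipate any major obstacle beyond this structural observation; the only mild technicality is the degenerate case $p^*<\gamma_{N_0}$, which is handled for free.
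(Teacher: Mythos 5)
Your proposal is correct and follows essentially the same route as the paper: the Margulis--Russo lemma combined with Lemma~\ref{erasuretillich} gives $\int_0^1(f_r-f_{r+1})\,dp\leq 1/d_r(C)$, which telescopes to a $\gamma_{N_0}^2$ bound on the total area, and an averaging/Markov step on a window of length $\gamma_{N_0}$ plus monotonicity finishes the argument. The only (immaterial) difference is that you average over $p$ on the $f$-curves while the paper passes to the inverse functions $\theta_r$ and averages over $\alpha$.
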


\begin{proof}
    Consider any $r\leq \textnormal{dim }C.$
    By Lemmas \ref{margulisrussolemma} and \ref{erasuretillich}, the function $f_r$ associated with our code $C$ satisfies
    \begin{align}\label{integraleq}
        \frac{d}{dp}f_r(p)&\geq  \Pr_{x\sim p}\Big[ h_{g_r}(x)\neq 0 \Big]\cdot d_r(C)\nonumber\\
        &=\Big(f_r(p)-f_{r+1}(p)\Big)d_r(C).
    \end{align} 
    Since each $\theta_r$ is the inverse function of $f_r$, the area between the curves $\theta_r(\alpha)$ and $\theta_{r+1}(\alpha)$ is the same as the area between the curves $f_r(p)$ and $f_{r+1}(p)$. Thus, we find that
\begin{align}\label{boundeachtheta}
     \int_0^1 \Big( \theta_{r+1} (\alpha) - \theta_{r} (\alpha) \Big) d\alpha &= \int_0^1 \Big( f_r (p) - f_{r+1} (p) \Big) dp\nonumber \\
     &\leq \frac{1}{d_r(C)},     %\leq \frac{1}{d_r}. 
\end{align}
where the last line follows from applying the Fundamental Theorem of Calculus to (\ref{integraleq}). Taking $\alpha$ uniformly at random from the interval $[0,1]$ and applying (\ref{boundeachtheta}) with $r=1,2,\dotsc,N_0-1$ 
then gives
\begin{align*}
    \E_{\alpha\in[0,1]}\Big[\theta_{N_0}(\alpha)-\theta_1(\alpha)\Big]&= \int_0^1 \Big( \theta_{N_0}(\alpha)-\theta_1(\alpha)\Big)d\alpha\\
    &= \sum_{r=1}^{N_0-1}\int_0^1 \Big( \theta_{r+1} (\alpha) - \theta_{r} (\alpha) \Big) d\alpha\\
    &\leq \sum_{r=1}^{N_0-1}\frac{1}{d_r(C)}\\
    &=\textnormal{\gamsq}.
\end{align*}
By Markov's inequality, we must then have that for $\alpha$ uniformly random in $[0,1]$,
\begin{align*}
    \Pr_{\alpha\in[0,1]}\Big[\theta_{N_0}(\alpha)-\theta_1(\alpha)>\textnormal{\gam}\Big]
    &\leq \textnormal{\gam}.
\end{align*}
In particular, we see that, for any $\alpha\in \big[0,1-\textnormal{\gam}\big]$, there must be some $\alpha'\in\big[\alpha,\alpha +\textnormal{\gam}\big]$ such that
\begin{align*}
    \theta_{N_0}(\alpha')-\theta_1(\alpha')\leq \textnormal{\gam}.
\end{align*}
Since the functions $\theta_r$ are increasing, it follows that
\begin{align*}
    \theta_{N_0}(\alpha)-\theta_1\big(\alpha+\textnormal{\gam}\big)\leq \textnormal{\gam}.
\end{align*}
\end{proof}

We are now ready to prove our main result. The following theorem is a formal version of Theorem \ref{bittoblockinformal}.
\begin{theorem}\label{bittoblock}
    Consider any linear code $C\subseteq\F_2^N$ with $N\geq 10$. Let $p,\delta\in[0,1]$ be such that $\textnormal{dim }C\geq \sqrt{\delta}N$ and
    \begin{align*}
        \Pr_{x\sim p}\Big[i\in \textnormal{supp}\big(S_C(x)\big)\Big]\leq \delta
    \end{align*}
    for every $i\in [N]$. Define
    \begin{align*}
        \Delta:=\min_{r=1,2,\dotsc, \sqrt{\delta} N}\Big\{\frac{d_r(C)}{r}  \Big\}.
    \end{align*}
    Then we have
    \begin{align*}
         \Pr_{x\sim p-\sqrt{\frac{\log N}{\Delta}}}\Big[ \exists c\in C:x\succ c \Big]\leq \sqrt{\delta}+\sqrt{\frac{\log N}{\Delta}}.
    \end{align*}
\end{theorem}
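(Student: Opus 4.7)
The plan is to apply Theorem~\ref{straightshot} with $N_0 := \lceil \sqrt{\delta} N \rceil$ (the hypothesis $\dim C \geq \sqrt{\delta} N$ guaranteeing $N_0 \leq \dim C$). The whole argument reduces to two claims: (a) the bit-error hypothesis forces $\theta_{N_0}(\sqrt{\delta}) \geq p$, and (b) the quantity \gam{} appearing in Theorem~\ref{straightshot} satisfies $\textnormal{\gam} \leq \sqrt{\log N/\Delta}$. Plugging these into Theorem~\ref{straightshot} at $\alpha = \sqrt{\delta}$ is then essentially immediate.

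For claim (a), the key observation is the elementary but crucial fact that any subcode $S \subseteq C$ satisfies $\dim S \leq |\textnormal{supp}(S)|$, since restricting $S$ to its support embeds it injectively into $\F_2^{|\textnormal{supp}(S)|}$. Combining this with the bit-error hypothesis and linearity of expectation yields
\begin{align*}
\E_{x\sim p}\big[\dim S_C(x)\big] \leq \E_{x \sim p}\big[|\textnormal{supp}(S_C(x))|\big] = \sum_{i=1}^N \Pr_{x \sim p}\Big[i \in \textnormal{supp}\big(S_C(x)\big)\Big] \leq \delta N.
\end{align*}
Markov's inequality then gives $f_{N_0}(p) = \Pr_{x\sim p}[\dim S_C(x) \geq N_0] \leq \delta N/N_0 \leq \sqrt{\delta}$, which, since $f_{N_0}$ is monotone increasing, is equivalent to $\theta_{N_0}(\sqrt{\delta}) \geq p$. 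For claim (b), the definition of $\Delta$ gives $d_r(C) \geq r\Delta$ for every $r \leq \sqrt{\delta} N$, and hence
\begin{align*}
\textnormal{\gamsq} = \sum_{r=1}^{N_0-1} \frac{1}{d_r(C)} \leq \frac{1}{\Delta}\sum_{r=1}^{N_0-1}\frac{1}{r} \leq \frac{\ln N + 1}{\Delta} \leq \frac{\log N}{\Delta},
\end{align*}
the last step using the hypothesis $N \geq 10$ (so that $\ln N + 1 \leq \log_2 N$).

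Assuming the claimed bound $\sqrt{\delta}+\sqrt{\log N/\Delta}$ is at most $1$ (else the theorem is vacuous), I plug (a) and (b) into Theorem~\ref{straightshot} at $\alpha = \sqrt{\delta}$ to obtain $\theta_1(\sqrt{\delta}+\textnormal{\gam}) \geq \theta_{N_0}(\sqrt{\delta}) - \textnormal{\gam} \geq p - \sqrt{\log N/\Delta}$. Monotonicity of $\theta_1$, combined with $\textnormal{\gam} \leq \sqrt{\log N/\Delta}$, then upgrades this to $\theta_1(\sqrt{\delta}+\sqrt{\log N/\Delta}) \geq p - \sqrt{\log N/\Delta}$, which inverted through $f_1$ is exactly the desired inequality (the event "$\exists c \in C\setminus\{0\}: x \succ c$" being precisely $\{\dim S_C(x) \geq 1\}$). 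The main conceptual content lies in the Markov step of (a), which hinges on the inequality $\dim S \leq |\textnormal{supp}(S)|$; everything else is arithmetic and a direct invocation of Theorem~\ref{straightshot}.
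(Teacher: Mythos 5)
Your proposal is correct and follows essentially the same route as the paper's proof: bound $\E_{x\sim p}[|\mathrm{supp}(S_C(x))|]$ by $\delta N$ via linearity of expectation, use $\dim S\leq|\mathrm{supp}(S)|$ and Markov to get $\theta_{N_0}(\sqrt{\delta})\geq p$, bound \gamsq{} by $\frac{\log N}{\Delta}$ via the harmonic sum and $N\geq 10$, and invoke Theorem~\ref{straightshot} at $\alpha=\sqrt{\delta}$. The only (harmless) difference is your explicit use of the ceiling $N_0=\lceil\sqrt{\delta}N\rceil$, which slightly tidies an integrality point the paper glosses over.
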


\begin{proof}
Note that we may assume that
\begin{align}\label{deltafine}
    \sqrt{\delta}+\sqrt{\frac{\log N}{\Delta}}\leq 1,
\end{align}
as otherwise the claim is trivial. 
By linearity of expectation, we have
    \begin{align*}
        \E_{x\sim p}\Big[\textnormal{supp}\big(S_C(x)\big)\Big]\leq \delta N.
    \end{align*}
Since the dimension of a subspace can at most be as large as its support, Markov's inequality then gives us
\begin{align*}
    \Pr_{x\sim p}\Big[\textnormal{dim}\big(S_C(x)\big)\geq\sqrt{\delta} N\Big]
    &\leq\frac{ \mathop{\E}_{x\sim p}\Big[\textnormal{supp}(S_C(x))\Big]}{\sqrt{\delta} N }\\
    &\leq  \sqrt{\delta},
\end{align*}
or equivalently,
\begin{align}\label{startpoint}
    \theta_{\sqrt{\delta} N}(\sqrt{\delta})\geq p.
\end{align}
On the other hand, by definition of $\Delta$ we have
\begin{align}\label{logbound}
    \sum_{r=1}^{\sqrt{\delta} N}\frac{1}{d_r(C)}\nonumber
    &\leq \frac{1}{\Delta}\sum_{r=1}^{\sqrt{\delta} N} \frac{1}{r}\nonumber\\
    &\leq \frac{\log N}{\Delta},
\end{align}
where in the last line we used the facts that $N\geq 10$ and that $\sum_{r=1}^{N'} \frac{1}{r}\leq \ln(N')+1$ for all $N'\geq 1$. By inequalities (\ref{deltafine}) and (\ref{logbound}), the conditions of Theorem \ref{straightshot} are satisfied for $N_0=\sqrt{\delta}N$ and $\alpha=\sqrt{\delta}$. Applying Theorem \ref{straightshot}, we then have
\begin{align*}
    \theta_{1}\left(\sqrt{\delta}+\sqrt{\frac{\log N}{\Delta}}\right)&\geq \theta_{1}\left(\sqrt{\delta}+\sqrt{\sum_{r=1}^{\sqrt{\delta} N}\frac{1}{d_r(C)}}\right)\\
    &\geq\theta_{\sqrt{\delta} N}(\sqrt{\delta})-\sqrt{\sum_{r=1}^{\sqrt{\delta} N}\frac{1}{d_r(C)}}\\
    &\geq\theta_{\sqrt{\delta} N}(\sqrt{\delta})-\sqrt{\frac{\log N}{\Delta}},
\end{align*}
where in the first and third lines we used the inequality (\ref{logbound}). By equation (\ref{startpoint}), we get
\begin{align*}
    \theta_{1}\left(\sqrt{\delta}+\sqrt{\frac{\log N}{\Delta}}\right)
    &\geq p-\sqrt{\frac{\log N}{\Delta}}.
\end{align*}
Since $\theta_1$ is the inverse function of $f_1$, this inequality is equivalent to our desired claim.
\end{proof}

\section{Doubly Transitive Codes}
In this section, we apply our Theorem \ref{bittoblock} to doubly transitive codes to bound the gap between the bit and block error thresholds of any doubly transitive linear code.
\begin{theorem}\label{thm2transitive}
    Let $C\subseteq \F_2^N$ be any doubly transitive linear code with $N\geq 10$.  Define $p^*\in[0,1]$ to be the noise parameter at which $\Pr_{x\sim p^*}\big[1\in S_C(x)\big]=\frac{1}{2}$, and fix any $p\leq p^*$. Suppose $\textnormal{dim }C\geq Ne^{-\frac{p^*-p}{2}\log (N-1)}$. Then defining
    \begin{align*}
        \Delta:=\min_{r=1,2,\dotsc, Ne^{-\frac{p^*-p}{2}\log (N-1)}}\Big\{\frac{d_r(C)}{r}  \Big\},
    \end{align*}
    we have
    \begin{align*}
        \Pr_{x\sim p-\sqrt{\frac{\log N}{\Delta}}}\Big[ \exists c\in C:x\succ c\Big]\leq \sqrt{\frac{\log N}{\Delta}}+e^{-\frac{p^*-p}{2}\log (N-1)}.
    \end{align*}
\end{theorem}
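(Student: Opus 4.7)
The plan is to apply Theorem \ref{bittoblock} directly, using Theorem \ref{bitcapacity} to furnish the required per-coordinate bit-error bound. Since Theorem \ref{bittoblock} already performs the general bit-to-block transfer for arbitrary linear codes, essentially all of the work is in choosing parameters so that the two theorems dovetail.

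First, I would fix the pair $(p,\delta)$ that will be fed into Theorem \ref{bittoblock}. Given that $C$ is doubly transitive and $p\leq p^*$, Theorem \ref{bitcapacity} yields, for every $i\in[N]$,
\[
\Pr_{x\sim p}\Big[i\in\textnormal{supp}\big(S_C(x)\big)\Big] \leq e^{-(p^*-p)\log(N-1)},
\]
so the natural choice is $\delta := e^{-(p^*-p)\log(N-1)}$, which gives $\sqrt{\delta}=e^{-\frac{p^*-p}{2}\log(N-1)}$.

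Next, I would check that the hypotheses of Theorem \ref{bittoblock} are satisfied with this $\delta$. The dimension condition $\textnormal{dim }C\geq \sqrt{\delta}N$ becomes exactly the assumption $\textnormal{dim }C\geq Ne^{-\frac{p^*-p}{2}\log(N-1)}$ present in the statement. Moreover, the range $r=1,\dotsc,\sqrt{\delta}N$ over which $\Delta$ is minimized in Theorem \ref{bittoblock} coincides with the range in the current theorem, so the two definitions of $\Delta$ agree. The assumption $N\geq 10$ is carried over directly.

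Finally, I would invoke Theorem \ref{bittoblock}, whose conclusion reads
\[
\Pr_{x\sim p-\sqrt{\log N/\Delta}}\Big[\exists c\in C:x\succ c\Big] \leq \sqrt{\delta}+\sqrt{\log N/\Delta},
\]
and substitute $\sqrt{\delta}=e^{-\frac{p^*-p}{2}\log(N-1)}$ to obtain the stated bound. There is no genuine obstacle here: the theorem is really just the specialization of the general framework of Section 3 to the bit-error estimate that holds automatically for doubly transitive codes. The only care required is in confirming that the range of $r$ and the dimension hypothesis indeed line up exactly with $\sqrt{\delta}N$.
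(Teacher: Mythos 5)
Your proposal is correct and matches the paper's own proof exactly: both set $\delta = e^{-(p^*-p)\log(N-1)}$ via Theorem \ref{bitcapacity} and then invoke Theorem \ref{bittoblock}, noting that the dimension hypothesis and the range of $r$ defining $\Delta$ line up with $\sqrt{\delta}N$. Nothing further is needed.
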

\begin{proof}

By Theorem \ref{bitcapacity}, we have that for all $i\in[N]$,
\begin{align*}
        \Pr_{x\sim p}\Big[i\in \textnormal{supp}\big(S_C(x)\big)\Big]\leq e^{-(p^*-p)\log (N-1)}.
    \end{align*}
 Applying Theorem  \ref{bittoblock} with $\delta=e^{-(p^*-p)\log (N-1)}$, we then get 
    \begin{align*}
         \Pr_{x\sim p-\sqrt{\frac{\log N}{\Delta}}}\Big[ \exists c\in C:x\succ c \Big]\leq \sqrt{\frac{\log N}{\Delta}}+e^{-\frac{p^*-p}{2}\log (N-1)}.
    \end{align*}
    
\end{proof}

\section{Reed--Muller Codes}\label{sectionrm}
We will now use Theorem \ref{thm2transitive} to show that Reed--Muller codes achieve capacity on the erasure channel. For any $t\leq d\leq n$, consider the subcode $S_t\subseteq\mathsf{RM}(n,d)$ defined in 
Theorem \ref{wei}. Recall that Wei showed in \cite{wei1991RMsubcodes} that $S_t$ is a subcode of minimal support for its dimension (see Theorem \ref{wei}). 
We bound its dimension and support explicitly below.

\begin{lemma}\label{rmbounds}
    For every $n\in\N$, every $d\leq \frac{n}{2}+\sqrt{n\log n}$ and every $t\in[5\sqrt{n\log n}, d]$, the corresponding subcode $S_t\subseteq\mathsf{RM}(n,d)$ satisfies
\begin{align*}
    \big|\textnormal{supp}(S_t)\big|\geq 2^{n-t},
\end{align*}
\begin{align*}
    \textnormal{dim}(S_t)\leq 2^{n-t}\cdot 2^{-\frac{1}{4} \left(\frac{t^2}{n}-\frac{t^3}{n^2} \right)}.
\end{align*}
\end{lemma}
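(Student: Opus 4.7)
The plan is to prove the support bound and the dimension bound separately. For the support bound, observe that every codeword in $S_t$ is the evaluation vector of a polynomial of the form $p(x_{t+1},\dotsc,x_n)\prod_{i=1}^t x_i$, which vanishes at every point $x \in \F_2^n$ where the first $t$ coordinates are not all equal to $1$. Hence $\textnormal{supp}(S_t)$ is contained in the set $T := \{x \in \F_2^n : x_1 = \cdots = x_t = 1\}$, which has cardinality $2^{n-t}$. Conversely, taking the constant polynomial $p \equiv 1$ (which has degree $0 \leq d-t$, using $t \leq d$) gives a codeword in $S_t$ whose support is exactly $T$, so $|\textnormal{supp}(S_t)| = 2^{n-t}$, as required.

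For the dimension bound, I first observe that $\textnormal{dim}(S_t) = \binom{n-t}{\leq d-t}$: the linear map sending a polynomial $p$ of degree $\leq d-t$ in the variables $x_{t+1},\dotsc,x_n$ (viewed as a function on $\F_2^{n-t}$) to the codeword associated with $p\cdot\prod_{i=1}^t x_i$ is injective, because the restriction of this codeword to $T$ recovers the values of $p$ on $\F_2^{n-t}$. Setting $m := n-t$ and $k := d-t$, the hypotheses $d \leq n/2 + \sqrt{n \log n}$ and $t \geq 5\sqrt{n \log n}$ yield
\begin{align*}
    \frac{m}{2} - k \;=\; \frac{n+t}{2} - d \;\geq\; \frac{t}{2} - \sqrt{n \log n} \;\geq\; \frac{3t}{10}.
\end{align*}
Applying Hoeffding's inequality to this partial binomial sum then gives
\begin{align*}
    \binom{m}{\leq k} \;\leq\; 2^m \exp\!\left(-\frac{2(m/2 - k)^2}{m}\right).
\end{align*}

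To conclude, I would check that the exponent above is at least $\frac{\ln 2}{4}\cdot\frac{t^2(n-t)}{n^2}$, which directly yields the claimed bound $\textnormal{dim}(S_t) \leq 2^{n-t}\cdot 2^{-\frac{1}{4}(t^2/n - t^3/n^2)}$. Substituting $m/2 - k \geq 3t/10$ into the left-hand side and using $m \leq n$ on the right reduces this to the numerical inequality $36\, n^2 \geq 50(\ln 2)\, m^2$, which holds since $50 \ln 2 < 36$ and $m \leq n$. I expect the only mild obstacle to be this piece of bookkeeping: one must verify that the constant $5$ in the hypothesis $t \geq 5\sqrt{n\log n}$ is large enough so that the Chernoff exponent matches $\frac{1}{4}(t^2/n - t^3/n^2)$ with the exact prefactor $\frac{1}{4}$, rather than some weaker value.
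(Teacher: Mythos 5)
Your proposal is correct and follows essentially the same route as the paper: both prove the support bound via the monomial $\prod_{i=1}^t x_i$, reduce the dimension bound to $\binom{n-t}{\leq d-t}$ with the same estimate $\frac{m}{2}-k\geq\frac{3t}{10}$, and then apply a binomial tail bound. The only cosmetic difference is that you invoke Hoeffding's inequality directly, whereas the paper uses $\binom{m}{\leq s}\leq 2^{h(s/m)m}$ together with $h\!\left(\frac{1-x}{2}\right)\leq 1-\frac{x^2}{2\ln 2}$, which amounts to exactly the same estimate.
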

\begin{proof}
    The first statement follows from the fact that the evaluation vector of the monomial $\prod_{i=1}^t x_i$ is in $S_t$, and the fact that $\prod_{i=1}^t x_i$ evaluates to $1$ on all points $x\in\F_2^n$ with $x_1=x_2=\cdots=x_t=1.$ For the second statement, we compute
    \begin{align*}
        \textnormal{dim}(S_t)&=\binom{n-t}{\leq d-t}\\
        &\leq \binom{n-t}{\leq\frac{n}{2}-\frac{4t}{5}},
    \end{align*}
    where in the second line we used the fact that $d\leq \frac{n}{2}+\sqrt{n\log n}\leq \frac{n}{2}+\frac{t}{5}$. Since $\binom{m}{\leq s}\leq 2^{h(\frac{s}{m})m}$ for all $s\leq \frac{m}{2}$ (see for example \cite{binomialbound}, Theorem 3.1) we then get
    \begin{align*}
        \textnormal{dim}(S_t)&\leq 2^{h\big(\frac{\frac{1}{2}-\frac{4t}{5n}}{1-\frac{t}{n}}\big)(n-t)}\\
        &\leq 2^{h\big(\frac{1-\frac{3t}{5n}}{2}\big)(n-t)},
        %\\
        %&\leq 2^{h\big(\frac{1-\frac{0.3t}{n}}{2}\big)(n-t)},
    \end{align*}
where in the second line we used the Taylor expansion $\frac{1}{1-x}=1+\sum_{j=1}^\infty x^j$ to bound $\frac{\frac{1}{2}-\frac{4x}{5}}{1-x}=\frac{1}{2}-\frac{3x}{10(1-x)}\leq\frac{1}{2}-\frac{3x}{10}$ for all $x\in[0,1)$. Applying the inequality $h(\frac{1-x}{2})\leq 1-\frac{x^2}{2\ln2}$, we then get
    \begin{align*}
        \textnormal{dim}(S_t)&\leq 2^{(1-\frac{t^2}{4n^2})(n-t)},
    \end{align*}
    as desired.
\end{proof}
As a corollary of Lemma \ref{rmbounds}, we get the following bound on the minimum size of the support of any $r$-dimensional subcode of a Reed--Muller code.

\begin{corollary}\label{ratiorm}
    For every $n$ large enough, every $d\leq \frac{n}{2}+\sqrt{n\log n}$ and every $\epsilon\in\left[6\sqrt{\frac{\log n}{n}},\frac{1}{2}\right] $, we have
    \begin{align*}
        \frac{d_r(\mathsf{RM}(n,d))}{r}\geq 2^{\frac{\epsilon^2 n}{10}}
    \end{align*}
    for all $r\leq 2^{n-\epsilon n}$.
\end{corollary}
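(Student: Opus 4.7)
The plan is to combine Theorem \ref{wei} with the bounds from Lemma \ref{rmbounds}. Fix $r \leq 2^{n(1-\epsilon)}$, set $s := \log r$, and write $u := n - s$, so that the hypothesis becomes $u \geq \epsilon n$. My strategy is to produce an integer $t$ with $\textnormal{dim}(S_t) \leq r$: then Theorem \ref{wei} together with the fact that $d_r$ is non-decreasing in $r$ yields
\begin{align*}
d_r(\mathsf{RM}(n,d)) \geq d_{\textnormal{dim}(S_t)}(\mathsf{RM}(n,d)) = |\textnormal{supp}(S_t)| \geq 2^{n-t},
\end{align*}
so $d_r/r \geq 2^{u-t}$. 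It therefore suffices to exhibit a $t$ with $u - t \geq \epsilon^2 n/10$.

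I would split on the size of $u$. If $u > d + \epsilon^2 n/10$, I fall back on the trivial bound $d_r \geq d_1(\mathsf{RM}(n,d)) = 2^{n-d}$ (the ordinary minimum distance), which immediately gives $d_r/r \geq 2^{u-d} > 2^{\epsilon^2 n/10}$. Otherwise $\epsilon n \leq u \leq d + \epsilon^2 n/10$, and I take $t := \lfloor u - \epsilon^2 n/10 \rfloor$. The case hypothesis gives $t \leq d$, and the assumption $\epsilon \geq 6\sqrt{\log n/n}$ gives $t \geq (19/20)\epsilon n - 1 \geq 5\sqrt{n\log n}$ for $n$ large; so $t \in [5\sqrt{n\log n},\, d]$ and Lemma \ref{rmbounds} applies. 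By Lemma \ref{rmbounds}, the remaining condition $\textnormal{dim}(S_t) \leq r = 2^{n-u}$ reduces to the inequality
\begin{align*}
\frac{t^2(n-t)}{4n^2} \;\geq\; u - t.
\end{align*}

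The main technical obstacle is verifying this last inequality. The function $t \mapsto t^2(n-t)$ is increasing on $[0,\, 2n/3]$, and $d \leq n/2 + \sqrt{n\log n} < 2n/3$ for $n$ large, so over the interval $[\epsilon n(1-\epsilon/10),\, d]$ its minimum is attained at the left endpoint. Plugging in $t = \epsilon n(1 - \epsilon/10)$ yields $t^2(n-t) \geq \epsilon^2 n^3 \cdot (1-\epsilon/10)^2 (1-\epsilon+\epsilon^2/10)$, and a short calculation shows that the bracketed product stays above $0.4$ for all $\epsilon \in (0,\, 1/2]$ (the worst case $\epsilon = 1/2$ gives about $0.47$). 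Dividing by $4n^2$ gives $\tfrac{t^2(n-t)}{4n^2} > 0.1\,\epsilon^2 n \geq u - t$, which closes the argument. The constant $10$ in the exponent $\epsilon^2 n/10$ is essentially tight against this calculation, which is what forces the case split above.
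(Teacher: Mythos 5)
Your proposal is correct and follows essentially the same route as the paper: reduce to $r\geq 2^{n-d-\epsilon^2 n/10}$ via the minimum distance, choose $t\approx n-\log r-\epsilon^2 n/10$, and verify $\textnormal{dim}(S_t)\leq r$ via Lemma \ref{rmbounds} and the monotonicity of $x^2(1-x)$ on $[0,2/3]$, with the same worst case at $\epsilon=1/2$. The only caveat is that the floor in your choice of $t$ shifts $u-t$ and the left endpoint by up to $1$, so you should note explicitly that the slack between your computed $0.47/4$ and the target $0.1$ absorbs these additive $O(1)$ losses (using $\epsilon^2 n\geq 36\log n$).
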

\begin{proof}
We first note that since the minimum distance of $\mathsf{RM}(n,d)$ is $2^{n-d}$, it will suffice to prove our claim for every $r\in[2^{n-d-\frac{\epsilon^2 n}{10}}, 2^{n-\epsilon n}]$. Consider any such $r$, and let the integer $t\in[\epsilon n-\frac{\epsilon^2 n}{10},d-1]$ be such that 
\begin{align}\label{pickt}
    2^{n-t-1-\frac{\epsilon^2 n}{10}}\leq r\leq2^{n-t-\frac{\epsilon^2 n}{10}}.
\end{align}

Note that by Lemma \ref{rmbounds} and our theorem's condition on $\epsilon$, for any $t\in[\epsilon n-\frac{\epsilon^2 n}{10},d-1]$ we have
\begin{align*}
    \textnormal{dim}(S_t)\leq 2^{n-t}\cdot 2^{-\frac{n}{4}\cdot \frac{t^2}{n^2}\left(1-\frac{t}{n} \right)}.
\end{align*}
But the function $x^2(1-x)$ is increasing over $[0,\frac{2}{3}]$. Since $t\leq \frac{n}{2}+\sqrt{n\log n}\leq \frac{2n}{3}$ for all $n$ large enough, we then get (because $\frac{t}{n}\geq \epsilon-\frac{\epsilon^2}{10}$)
\begin{align*}
    \textnormal{dim}(S_t)&\leq 2^{n-t}\cdot 2^{-\frac{n}{4} \left(\epsilon-\frac{\epsilon^2}{10}\right)^2\left(1-\epsilon+\frac{\epsilon^2}{10}\right)}\\
    &\leq 2^{n-t}\cdot 2^{-\frac{n}{4} \left(\frac{19\epsilon}{20}\right)^2\cdot\frac{1}{2}}\\
    &\leq 2^{n-t}\cdot 2^{-\frac{\epsilon^2 n}{10}-1},
\end{align*}
where in the second line we used the fact that $\epsilon\leq \frac{1}{2}$.
Combining this with the leftmost inequality of (\ref{pickt}), we get $r\geq 2^{n-t-1-\frac{\epsilon^2 n}{10}}\geq \textnormal{dim}(S_{t})$. By Theorem~\ref{wei} and Lemma \ref{rmbounds}, we must then have 
\begin{align*}
    d_r\big( \mathsf{RM}(n,d)\big)  &\geq \big|\textnormal{supp}(S_{t})\big|\\
    &\geq 2^{n-t}.
\end{align*}
Combining this inequality with the right-hand side of (\ref{pickt}), we get
\begin{align*}
        \frac{d_r\big( \mathsf{RM}(n,d)\big)}{r}&\geq 2^{\frac{\epsilon^2 n}{10}}.
    \end{align*}
\end{proof}

We are now ready to prove that the bit and block error thresholds of Reed--Muller codes are asymptotically equal. Since every doubly transitive code achieves capacity under bit-MAP decoding \cite{kudekar2016erasure}, this implies that Reed--Muller codes achieve capacity under block-MAP decoding.

\begin{theorem}\label{rmcapacity}
     For every $n$ large enough, every $d\in[\frac{n}{2}-\sqrt{n\log n}, \frac{n}{2}+\sqrt{n\log n}]$ and every $\epsilon\geq20\sqrt{\frac{\log n}{n}} $, the Reed--Muller code $\mathsf{RM}(n,d)$ satisfies
\begin{align*}
    \Pr_{x\sim p^*-\epsilon}\Big[\exists c\in C:x\succ c\Big]\leq 2^{-\frac{\epsilon^2 n}{100}},
\end{align*}
where $p^*\in[0,1]$ is such that $\Pr_{x\sim p^*}\big[1\in \textnormal{supp}(S_C(x))\big]=\frac{1}{2}$.
\end{theorem}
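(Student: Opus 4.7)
My plan is to apply Theorem \ref{thm2transitive} to $C = \mathsf{RM}(n,d)$ and control the quantity $\Delta$ appearing there using Corollary \ref{ratiorm}. Reed--Muller codes are doubly transitive under the affine group $\mathrm{AGL}(n, \F_2)$, and $\dim \mathsf{RM}(n,d) = \binom{n}{\leq d}$ is at least a $1/\mathrm{poly}(n)$ fraction of $2^n$ when $d$ is within $\sqrt{n\log n}$ of $n/2$, which safely exceeds the lower bound on $\dim C$ required by Theorem \ref{thm2transitive}. So the hypotheses of that theorem are satisfied.

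The key step is the parameter choice. I would take $p := p^* - c_1\epsilon$ for a constant $c_1$ slightly less than one, say $c_1 = 9/10$. Then the range of $r$ appearing in the definition of $\Delta$ inside Theorem \ref{thm2transitive} is $r \leq Ne^{-c_1\epsilon n/2} = 2^{(1 - c_1\epsilon/(2\ln 2))n}$. I would then invoke Corollary \ref{ratiorm} with $\epsilon' := c_2\epsilon$ for some $c_2 < c_1/(2\ln 2)$, say $c_2 = 3/5$. Because $c_2 < c_1/(2\ln 2)$, Corollary \ref{ratiorm}'s range $r \leq 2^{(1-c_2\epsilon)n}$ contains the previous range, so it yields $\Delta \geq 2^{c_2^2\epsilon^2 n/10}$.

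With $\Delta$ bounded, Theorem \ref{thm2transitive} gives a bound of $\sqrt{n/\Delta} + e^{-c_1\epsilon \log(N-1)/2}$ on the block decoding failure probability at noise level $p - \sqrt{n/\Delta}$. The hypothesis $\epsilon \geq 20\sqrt{\log n/n}$, equivalently $\epsilon^2 n \geq 400\log n$, ensures that $\sqrt{n/\Delta}$ is much smaller than $(1-c_1)\epsilon$, so that $p - \sqrt{n/\Delta} \geq p^* - \epsilon$. By monotonicity of $\Pr_{x \sim p'}[\exists c\in C: x\succ c]$ in $p'$ (evident from the obvious monotone coupling of erasure patterns), the bound then transfers to noise level $p^* - \epsilon$.

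Finally, I would verify that each of the two terms in the resulting expression is at most $\tfrac{1}{2}\cdot 2^{-\epsilon^2 n/100}$. The exponential term $e^{-c_1\epsilon n/2}$ decays like $2^{-\Omega(\epsilon n)}$, which for $\epsilon \leq 1$ dominates $2^{-\epsilon^2 n/100}$. The first term $\sqrt{n}\cdot 2^{-c_2^2 \epsilon^2 n/20}$ requires $c_2^2/20 > 1/100$, i.e.\ $c_2 > 1/\sqrt{5}$, which is satisfied by $c_2 = 3/5$, and the spare $\sqrt{n}$ factor is absorbed by the slack in $\epsilon^2 n \geq 400 \log n$. The main obstacle is simply threading these constants through the inequalities: the conversions between $e^{-\cdot}$ coming from Theorem \ref{thm2transitive} and $2^{-\cdot}$ coming from Corollary \ref{ratiorm}, together with the narrow feasible window $1/\sqrt{5} < c_2 \leq c_1/(2\ln 2)$, force some care in picking constants but need no new ideas beyond what Theorems \ref{thm2transitive} and \ref{wei} and Corollary \ref{ratiorm} already establish.
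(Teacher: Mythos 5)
Your proposal is correct and follows essentially the same route as the paper: apply Theorem \ref{thm2transitive}, lower-bound $\Delta$ via Corollary \ref{ratiorm} with a rescaled parameter, and thread the constants (the paper takes $p=p^*-\frac{2\epsilon}{3}\cdot\frac{n}{n-1}$ and Corollary-parameter $\frac{\epsilon}{3\ln 2}$ where you take $\frac{9\epsilon}{10}$ and $\frac{3\epsilon}{5}$). One small constant to repair: Corollary \ref{ratiorm} also requires its parameter to be at most $\frac{1}{2}$, which $\frac{3\epsilon}{5}$ violates for $\epsilon\in(\frac{5}{6},1]$; choosing $c_2\in(\frac{1}{\sqrt{5}},\frac{1}{2}]$ — still inside the feasible window you identified — fixes this for all $\epsilon\le 1$.
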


\begin{proof}
    Note that we may assume that $\epsilon\leq 1$. Note also that
    \begin{align*}
        \textnormal{dim}\big(\mathsf{RM}(n,d)\big)&=\binom{n}{\leq d}\\
        &\geq \binom{n}{\leq \frac{n}{2}-\sqrt{n\log n}}\\
        &\geq \frac{1}{\sqrt{2n}}\cdot 2^{h\left( \frac{1}{2}-\sqrt{\frac{\log n}{n}} \right)n},
    \end{align*}
    where in the third line we used the inequality $\binom{n}{ k}\geq\frac{1}{\sqrt{2n}}\cdot 2^{h(k/n)n}$ (see for example \cite{1977book}, page 309, Lemma 7). Applying the inequality $h(\frac{1-x}{2})\geq 1-x^2$, we then get 

    \begin{align*}
        \textnormal{dim}\big(\mathsf{RM}(n,d)\big)&\geq \frac{1}{\sqrt{2n}}2^{(1-4\frac{\log n}{n})n}\\
        &\geq 2^ne^{-\frac{\epsilon(n-1)}{6}}.
    \end{align*}

    By Theorem \ref{thm2transitive}, for every $p\leq p^*-\frac{\epsilon}{3}$, we then have 
    \begin{align}\label{boundp}
        \Pr_{x\sim p-\sqrt{\frac{n}{\Delta}}}\Big[ \exists c\in C:x\succ c\Big]&\leq \sqrt{\frac{n}{\Delta}}+e^{-\frac{p^*-p}{2}\log (2^n-1)}\nonumber\\
        &\leq \sqrt{\frac{n}{\Delta}}+e^{-\frac{p^*-p}{2}(n-1)}
    \end{align}
    for
    \begin{align*}
        \Delta:=\min_{r=1,2,\dotsc, Ne^{-\frac{p^*-p}{2}(n-1)}}\Big\{\frac{d_r\big(\mathsf{RM}(n,d)\big)}{r}  \Big\}.
    \end{align*}
    Letting $p=p^*-\frac{2\epsilon}{3}\cdot\frac{n}{n-1}$ and applying Corollary \ref{ratiorm} with parameter $\frac{\epsilon}{3\ln2}$ (the conditions of Corollary \ref{ratiorm} are satisfied by our theorem's condition on $\epsilon$), we get
    \begin{align*}
        \Delta\geq2^{\frac{\epsilon^2n}{90(\ln2)^2}}\geq n2^{\frac{\epsilon^2n}{50}+2},
    \end{align*}
    and thus equation (\ref{boundp}) becomes

    \begin{align*}
        \Pr_{x\sim p-2^{-\frac{\epsilon^2 n}{100}}}\Big[ \exists c\in C:x\succ c\Big]&\leq 2^{-\frac{\epsilon^2 n}{100}-1}+e^{-\frac{\epsilon n}{3}}\\
        &\leq 2^{-\frac{\epsilon^2 n}{100}}.
    \end{align*}
    But by definition $p=p^*-\frac{2\epsilon}{3}\cdot \frac{n}{n-1}$ and by our theorem's conditions $2^{-\frac{\epsilon^2 n}{100}}<\frac{\epsilon}{4}$, so we get
    \begin{align*}
        \Pr_{x\sim p^*-\epsilon}\Big[ \exists c\in C:x\succ c\Big]\leq 2^{-\frac{\epsilon^2 n}{100}}.
    \end{align*}
\end{proof}

\begin{remark}
    The work of Tillich and Z{\'{e}}mor gives very sharp block error decays in terms of the minimum distance of a linear code \cite{tillich2000sharptransition}. Thus, the primary purpose here is to bound the distance in $p$ between the block-error and bit-error thresholds. Once we have a bound such as the one given by our Theorem \ref{rmcapacity}, we immediately get a strong error decay by \cite{tillich2000sharptransition}.
\end{remark}

\begin{remark}
For Reed--Muller codes specifically, the ratios $\{\frac{d_r}{r}\}$ are large enough that one does not need the full power of Theorem \ref{bitcapacity}'s bit-error decay, $\Pr_{x\sim p}[\textnormal{bit error}]\leq e^{-(p^*-p)n}$. To show that the bit-error and block-error thresholds of Reed--Muller codes are asymptotically equal, it would have been sufficient to use a bit error decay of $e^{-(p^*-p)a\sqrt{n\log n}}$ for any $a=\omega(1)$. This is because by Corollary \ref{ratiorm}, we have $\frac{d_r}{r}\geq n^{10}$ for all $r\leq 2^{n-10\sqrt{n\log n}}$. Setting $p=p^*-\frac{20\ln2}{a}$ and applying Theorem \ref{bittoblock}
would then have given that the block-error probability under noise $p^*-\frac{20\ln2}{a}-\frac{1}{n^4}$ is bounded by $\frac{2}{n^4}.$

\end{remark}
\section*{Acknowledgments}
We thank Iwan Duursma for useful comments. This work was initiated while the authors were visiting the Simons Institute for the Theory of Computing. The work of Henry D. Pfister was supported in part by NSF CCF-2106213. The work of Oscar Sprumont was supported in part by NSF CCF-2131899, NSF CCF-1813135 and Anna Karlin's Bill and Melinda Gates Endowed Chair. The work of Gilles Z{\'{e}}mor was supported in part by Plan France 2030 through the project NISQ2LSQ, ANR-22-PETQ-0006.

\bibliographystyle{alpha}
\bibliography{erasure}

\end{document}